\newtheorem{theorem}[equation]{Theorem}
\newtheorem{lemma}{Lemma}
\newtheorem{corollary}{Corollary}
\newtheorem{fact}{Fact}
\newtheorem{observation}{Observation}
\theoremstyle{definition}
\newtheorem{definition}{Definition}
\newcommand{\etal}{{\em et al. }}
\newcommand{\eps}{\varepsilon}
\newcommand{\COMMENTED}[1]{{}}
\newcommand{\bind}{{\mathrm{bind}}}
\begin{document}

\title{A Hall-type theorem with algorithmic consequences in planar graphs }

\author{Ebrahim Ghorbani, \;\; Hossein Jowhari\footnote{The author is also affiliated with the School of Mathematics at the Institute for Research in Fundamental Sciences (IPM) in Tehran, Iran. This work is supported by IPM under
Project Number 98050014. } }
\date{
\footnotesize
Faculty of Mathematics, K. N. Toosi University of Technology,\\
P. O. Box 16765-3381, Tehran, Iran.\\
{\tt \{ghorbani,jowhari\}@kntu.ac.ir}}
\maketitle

\begin{abstract}
Given a graph $G=(V,E)$, for a vertex set $S\subseteq V$, let $N(S)$ denote
the set of vertices in $V$ that have a neighbor in $S$.
Extending the concept of binding number of graphs by Woodall~(1973), for a vertex set $X \subseteq V$, we define the binding number of $X$, denoted by $\bind(X)$, as the maximum number $b$ such that for every $S \subseteq X$  where $N(S)\neq V(G)$  it holds that 
 $|N(S)|\ge b {|S|}$. Given this definition, we prove that if a graph $V(G)$ contains a subset $X$  with $\bind(X)= 1/k$ where $k$ is an integer, then $G$ possesses a matching of size at least $|X|/(k+1)$. 
Using this statement, we derive tight bounds for the estimators of
the matching size in planar graphs. 
These estimators are previously used in designing sublinear space algorithms for approximating the maching size in the data stream model of computation.
In particular, we show that
the number of locally superior vertices is a $3$
factor approximation of the matching size in planar graphs. The previous analysis by Jowhari (2023) proved a $3.5$ approximation factor. As another application, we show
a simple variant of an estimator by Esfandiari \etal (2015) achieves $3$ factor approximation 
of the matching size in planar graphs. Namely, let $s$ be the number of edges with
both endpoints having degree at most $2$ and let $h$ be the number of 
vertices with degree at least $3$. We prove that when the graph is planar, the size of 
matching is at least $(s+h)/3$. This result generalizes a known fact that every planar graph on $n$ vertices with minimum degree $3$ has a matching of size at least $n/3$. 
\end{abstract}

\section{Introduction} 
In the case of bipartite graphs $G=(A\cup B,E)$, the celebrated Hall's theorem \cite{Hall} formulates the necessary and sufficient 
conditions for the existence of a perfect matching on $A$ in terms of the size of the neighborhoods of the subsets of $A$. Namely, Hall's theorem states that if for every $S \subseteq A$ it holds that $|N(S)| \ge |S|$, then $\nu(G) \ge |A|$ where $\nu(G)$ is size of a maximum matching in $G$.   For general graphs, perhaps the most relevant result is due to Woodall \cite{Woodall73}. In his seminal work, Woodall defines the binding number of the graph $G$, denoted by $\bind(G)$, as the largest number $b$ such that for every $S \subseteq V(G)$, where $N(S) \neq V(G)$, it holds that $|N(S)| \ge b|S|$. Based on this definition, Woodall showed the following result. 

\begin{theorem}	[Woodal \cite{Woodall73}] \label{thm:woodall}
Let $G$ be a graph with $\bind(G) = c \in [0,\frac12]$. Then  $\nu(G) \ge \frac{c}{c+1}|V(G)|$.
\end{theorem}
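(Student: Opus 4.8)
The plan is to control the \emph{deficiency} $\mathrm{def}(G) = |V(G)| - 2\nu(G)$ through the Tutte--Berge formula and then translate the resulting bound back into a lower bound on $\nu(G)$, with the hypothesis $c \le \tfrac12$ entering in an essential way. Write $n = |V(G)|$. By Tutte--Berge there is a set $U \subseteq V(G)$ with $\mathrm{def}(G) = o(G-U) - |U|$, where $o(\cdot)$ counts odd components; I would fix such a $U$ and set $u = |U|$. If $\mathrm{def}(G) \le 0$ there is a near-perfect matching and, since $\tfrac{c}{c+1} \le \tfrac12$, the claim is immediate, so assume $o(G-U) = m \ge 1$.

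The first step is to split the $m$ odd components of $G-U$ into the $a$ singletons and the remaining $\ell$ components, which have at least $3$ vertices each; let $Q$ denote the total number of vertices in the latter, so that $m = a + \ell$, $Q \ge 3\ell$, and $n \ge u + a + Q$ (the even components only increase $n$). The one place I would invoke the binding number is on the set $S$ of singleton vertices. Each such vertex is isolated in $G - U$, so all of its neighbors lie in $U$; hence $N(S) \subseteq U$, which gives both $N(S) \ne V(G)$ and $|N(S)| \le u$. Applying $\bind(G) = c$ to $S$ then yields the single inequality I really need, namely $u \ge c\,a$.

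It then remains to assemble the counts. From $\mathrm{def}(G) = m - u = a + \ell - u$ and $\ell \le Q/3$ I obtain $\mathrm{def}(G) \le a + \tfrac{Q}{3} - u$, and substituting $n \ge u + a + Q$ into the target inequality $\mathrm{def}(G) \le \tfrac{1-c}{1+c}\,n$ reduces it, after clearing denominators, to $u \ge c\,a + \tfrac{2c-1}{3}\,Q$. This is exactly where the hypothesis is used: for $c \le \tfrac12$ the coefficient $\tfrac{2c-1}{3}$ is nonpositive, so $u \ge c\,a$ from the previous step already suffices, and the degenerate case $a = 0$ follows even more directly from $\mathrm{def}(G) \le \ell \le Q/3 \le n/3 \le \tfrac{1-c}{1+c}\,n$. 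Finally $\nu(G) = \tfrac{n - \mathrm{def}(G)}{2} \ge \tfrac{c}{c+1}\,n$, as desired.

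I expect the main conceptual obstacle to be recognizing that the large odd components can be handled \emph{without} any expansion hypothesis at all: the naive instinct is to produce a useful inequality for each component, but a subset drawn from a large connected component expands too much inside that component to give any leverage. The resolution is that the crude bound $\ell \le Q/3$ combined with $c \le \tfrac12$ makes those components harmless, so the binding number is applied only to the singletons. It would be worth double-checking the extremal configuration — a complete bipartite graph between $U$ and a set of $a = u/c$ independent vertices — to confirm that both the factor $\tfrac{c}{c+1}$ and the restriction $c \le \tfrac12$ are genuinely tight.
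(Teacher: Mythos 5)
The paper never proves this statement itself---it is quoted from Woodall~\cite{Woodall73}, with the remark that his argument rests on the Tutte--Berge formula---and your proposal is a correct, self-contained proof along exactly that attributed line: the Tutte--Berge deficiency set $U$, the split of odd components into $a$ singletons and $\ell$ components of size $\ge 3$, and the application of $\bind(G)=c$ only to the singleton set $S$ (legitimate, since $N(S)\subseteq U\neq V(G)$) giving $u\ge ca$. I checked the algebra: the target $\mathrm{def}(G)\le\frac{1-c}{1+c}n$ does reduce, via $n\ge u+a+Q$ and $\ell\le Q/3$, to $u\ge ca+\frac{2c-1}{3}Q$, which holds for $c\le\frac12$ exactly as you say (and even the $a=0$ case needs no separate treatment, since $u\ge 0=ca$ then holds trivially), so the proof is sound and the extremal $K_{u,\,u/c}$ confirms tightness.
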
  

\subsection{Main result}
 In this paper, we utilize a similar concept and prove a statement that can be thought of as another variant of Hall's theorem for general graphs. For this, we extend the defintion of Woodall's binding number to arbitrary subsets of vertices. 

\begin{definition} 
\label{def:binding}
Let $G=(V,E)$ be a simple graph. The binding number of the vertex set $X \subseteq V$, denoted
by $\bind(X)$, is the largest number $b$ such that for every subset $S \subseteq X$ where $N(S) \neq V(G)$, we have $|N(S)| \ge b |S|$. 
\end{definition}

The following is the main result of this paper.

\begin{theorem} 
\label{thm:main}
Let $k \ge 2$ be an integer. Let $X$ be a vertex set in graph $G$ with binding number $c = \frac1k$.
Then $\nu(G) \ge \frac{c}{c+1}|X|$. 
\end{theorem}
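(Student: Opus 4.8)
The plan is to convert the fractional expansion hypothesis $\bind(X)=1/k$ into an exact Hall condition by a vertex blow-up, extract from Hall's theorem a degree-constrained assignment, and then lower-bound the matching number of the resulting sparse graph. First I would normalize the target: since $c=1/k$ we have $\frac{c}{c+1}=\frac{1}{k+1}$, so the claim reads $\nu(G)\ge \frac{1}{k+1}|X|$.

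Write $N(X)=\bigcup_{x\in X}N(x)$ and build an auxiliary bipartite graph $B$ with parts $X$ and $Y=N(X)\times\{1,\dots,k\}$, joining $x\in X$ to every copy $(v,i)$ with $xv\in E(G)$. For $S\subseteq X$ the copies adjacent to $S$ are exactly $N(S)\times\{1,\dots,k\}$ (intersecting with $N(X)$ changes nothing, as $N(S)\subseteq N(X)$), so $|N_B(S)|=k\,|N(S)|$. Whenever $N(S)\ne V(G)$ the hypothesis gives $|N(S)|\ge |S|/k$ and hence $|N_B(S)|\ge |S|$; the degenerate case $N(S)=V(G)$ is dispatched separately, using $\bind(X)=1/k$ on $S=X$ to obtain $k\,|N(X)|\ge |X|\ge |S|$. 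Thus $B$ satisfies Hall's condition, and Hall's theorem yields a matching of $B$ saturating $X$. Reading this matching back into $G$ produces a map $\phi\colon X\to N(X)$ with $x\phi(x)\in E(G)$ for all $x$ and $|\phi^{-1}(v)|\le k$ for every vertex $v$, since $v$ owns only $k$ copies in $Y$.

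Next I would analyze the digraph $\vec H$ on $V(G)$ with arc set $\{x\to\phi(x):x\in X\}$, which has exactly $|X|$ arcs. Every vertex has out-degree at most $1$ and in-degree at most $k$, so the underlying undirected graph $H$ is a pseudoforest (each component contains at most one cycle) of maximum degree at most $k+1$. The crux is the combinatorial lemma that any such pseudoforest satisfies $\nu(H)\ge |E(H)|/(k+1)$, which I would prove by induction on the number of edges: in a tree component, pick a vertex all of whose children are leaves, add one incident edge to the matching, and delete that vertex together with its leaf-children, discarding at most $k+1$ edges (its degree) while gaining one matching edge; a component's unique cycle $C_g$ is settled at the base of the induction via $\nu(C_g)=\lfloor g/2\rfloor\ge g/(k+1)$, which is exactly where the hypothesis $k\ge 2$ (the triangle case) is used. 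As a matching of $H$ is a matching of $G$, this gives $\nu(G)\ge |X|/(k+1)$, and the bound is tight precisely on the star $K_{1,k+1}$, mirroring the extremal star $K_{1,k}$ for the theorem itself.

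I expect the main obstacle to be the careful accounting in the pseudoforest lemma, ensuring the charge per matching edge stays at $k+1$ rather than $k+2$. The delicate point is a component whose unique cycle is a $2$-cycle $x_1\leftrightarrow x_2$: there two arcs collapse to a single edge, so the arc count $|X_C|$ exceeds the simple-edge count by one, and one must check that $x_1,x_2$ each have underlying degree at most $k$ (their mutual arc consumes one unit of in/out-degree) so that the induction recovers the missing $1/(k+1)$. The $N(S)=V(G)$ boundary case in the Hall step is a second, milder technicality to settle cleanly.
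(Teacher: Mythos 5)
Your proposal is correct and takes essentially the same route as the paper: your blow-up-plus-Hall step produces exactly the bounded-multiplicity assignment of Lemma~\ref{lem:richness} (which the paper derives via max-flow min-cut, an equivalent argument), and your pseudoforest lemma $\nu(H)\ge |E(H)|/(k+1)$ applied to the functional digraph $x\to\phi(x)$ is an edge-count restatement of Lemma~\ref{lem:max3} on trees and unicyclic multigraphs, with your out-degree-at-most-one observation replacing the paper's four-case unicyclicity analysis. One small slip: in the degenerate case $N(S)=V(G)$ you cannot invoke $\bind(X)=\frac1k$ on $S=X$, since Definition~\ref{def:binding} excludes sets whose neighborhood is all of $V(G)$; but there $N(X)=V(G)\supseteq X$, so $k\,|N(X)|\ge |V(G)|\ge |X|\ge |S|$ holds trivially, which is exactly how the paper's min-cut argument dispatches the same case.
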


We prove Theorem~\ref{thm:main} using max-flow min-cut theorem as well as lower bounds for matching size in trees and unicyclic graphs with bounded degree. Unfortunately our proof strategy only works when $\bind(X)= \frac1k$ for an integer $k$. We conjecture that a similar statement should hold for all values $\bind(X) \in [0,\frac12]$, and  leave it as an open question. It is worth noting that Theorem~\ref{thm:main} does not immediately follow from Theorem~\ref{thm:woodall}. Taking the induced subgraph on $X \cup N(X)$ is not sufficient, as its binding number could be much smaller than $\bind(X)$. On the other hand, it is not clear how to apply Woodall's arguments, which are based on the Tutte-Berge formula \cite{TB58}, to derive our result from Theorem~\ref{thm:woodall}.

\subsection{Applications}

We  now explore some applications of Theorem \ref{thm:main}.

\paragraph{Locally superior vertices.} Given an undirected graph $G=(V,E)$, the vertex $u \in V$ is called locally superior if $u$ has an adjacent vertex with degree at most $\deg(u)$. The set of locally superior vertices of $G$ is denoted by $L(G)$. The aboricity of the undirected graph $G$ is the minimum number of forests that cover all edges of $G$.  In \cite{Jowhari23}, it was shown that the number of locally superior vertices approximates the size of the matching in graphs with bounded arboricity. In particular, it was shown that when $G$ is planar, we have $\nu(G) \le |L(G)| \le 3.5 \nu(G)$. 
Moreover, when the arboricity of $G$ is bounded by $\alpha$, it was shown that 
$\nu(G) \le |L(G)| \le (\alpha+2) \nu(G)$.  
By utilizing Theorem \ref{thm:main}, we tighten the previous analysis and establish the following theorem. The proof appears in Section \ref{sec:consequences}.

\begin{theorem}
\label{thm:ls}
Let $G$ be a graph and let $\alpha \ge 2$ be an integer. We have $\nu(G) \le |L(G)| \le 3 \nu(G)$ when $G$ is planar. 
When the arboricity of $G$ is bounded by $\alpha$, we have $\nu(G) \le |L(G)| \le (\alpha+1)\nu(G)$
\end{theorem}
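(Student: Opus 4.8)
The lower bound $\nu(G)\le |L(G)|$ is the easy direction: fix a maximum matching $M$; each edge of $M$ has at least one locally superior endpoint (the endpoint of no-smaller degree has the other endpoint as a witness of no-larger degree), and choosing one such endpoint per edge produces $|M|$ distinct vertices of $L(G)$, so $|L(G)|\ge |M|=\nu(G)$. The whole content is therefore in the two upper bounds, and the plan is to obtain both from Theorem~\ref{thm:main} by taking $X=L(G)$ and lower bounding $\bind(L(G))$: namely $\bind(L(G))\ge \tfrac12$ in the planar case and $\bind(L(G))\ge \tfrac1\alpha$ in the arboricity-$\alpha$ case. Granting these, Theorem~\ref{thm:main} with $k=2$ gives $\nu(G)\ge\frac{1/2}{1/2+1}|L(G)|=|L(G)|/3$, and with $k=\alpha$ gives $\nu(G)\ge\frac{1/\alpha}{1/\alpha+1}|L(G)|=|L(G)|/(\alpha+1)$, which are exactly the claimed bounds. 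Here I use that $c\mapsto c/(c+1)$ is increasing and that the hypothesis of Theorem~\ref{thm:main} only requires $|N(S)|\ge \tfrac1k|S|$ for every relevant $S$, so a lower bound $\bind(L(G))\ge\tfrac1k$ (with $k\ge 2$) suffices to invoke it.

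To bound $\bind(L(G))$, fix $S\subseteq L(G)$ with $N(S)\neq V(G)$ and write $T=N(S)$; I must show $|T|\ge |S|/\alpha$ (respectively $|T|\ge|S|/2$). For every $u\in S$ choose a \emph{witness} $w(u)\in T$ of smallest degree among the neighbours of $u$, so that $\deg(w(u))\le\deg(u)$ by local superiority, and let $\ell(t)=|\{u\in S: w(u)=t\}|$ be the load of $t\in T$. Since the superiors assigned to $t$ are neighbours of $t$ we have $\ell(t)\le\deg(t)$, and each such superior $u$ satisfies $\deg(u)\ge\deg(t)$. The key dichotomy is between \emph{light} witnesses, with $\deg(t)\le\alpha$, which carry load at most $\alpha$ and hence account for at most $\alpha$ superiors each, and \emph{heavy} witnesses, with $\deg(t)\ge\alpha+1$, all of whose superiors have degree at least $\alpha+1$. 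For the heavy part I would count the edges that these high-degree superiors send into $T$: if $S'$ denotes the superiors of heavy witnesses, then $\sum_{u\in S'}\deg(u)\ge(\alpha+1)|S'|$, every such edge lies inside $G[S'\cup N(S')]$ with $N(S')\subseteq T$, and the arboricity bound $|E(H)|\le\alpha(|V(H)|-1)$ applied to this subgraph forces $|T|\ge|N(S')|\ge|S'|/\alpha$. Combining the light and heavy estimates then yields $|S|\le\alpha|T|$.

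In the planar case the same scheme gives the sharper factor: the dichotomy is now placed at degree $2$, and for the heavy superiors I would replace the generic density bound by the bipartite planar estimate $|E|\le 2|V|-4$ on the bipartite subgraph of edges from $S'$ into $T$, which upgrades the conclusion to $|T|\ge|S'|/2$ and hence $\bind(L(G))\ge\tfrac12$. The step I expect to be the main obstacle is making the combination of the light and heavy contributions rigorous: the two groups share the vertex set $T$, and superiors may be adjacent to one another (so $S$ need not be independent and a naive edge count loses a factor of two). I would handle this by organising the entire argument as a single min-congestion witness assignment and analysing the residual reachability set of an overloaded witness, exploiting the point that an internal edge of $S$ places both of its endpoints into $T$ and so only helps the bound. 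A secondary technical issue, settled by fixing a degree-then-index tie-breaking once and for all, is the treatment of equal-degree adjacent pairs in the definition of local superiority.
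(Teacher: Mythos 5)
Your high-level route is exactly the paper's: the easy direction $\nu(G)\le|L(G)|$ is argued identically, and the upper bounds are obtained by lower-bounding a binding number and invoking Theorem~\ref{thm:main} (the paper applies it to $K_t(G)=L_t(G)\cup H_t(G)\supseteq L(G)$ via Lemma~\ref{lem:KtG} rather than to $L(G)$ itself, but since $\bind$ can only grow when passing to a subset, your target $\bind(L(G))\ge\tfrac1t$ is equivalent for this purpose, and your remark that only the inequality $|N(S)|\ge|S|/k$ is needed to run Theorem~\ref{thm:main} is implicitly used by the paper too). The problem is that the one step carrying all the content --- the binding-number bound --- is precisely the step you leave unproved, and your static light/heavy dichotomy genuinely fails there. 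Concretely: writing $W_L$ for the light witnesses and $S'$ for the heavy superiors, you get $|W_L|\ge|S_L|/\alpha$ and (even granting that $S$ has no internal edges) $|N(S')|\ge|S'|/\alpha$, but $W_L$ and $N(S')$ live in the same set $T$ and can overlap arbitrarily, since a low-degree witness may be adjacent to many heavy superiors. So you only obtain $|T|\ge\max\bigl(|W_L|,\,|S'|/\alpha\bigr)$; taking $|W_L|=m$, $|S_L|=\alpha m$, $|S'|=\alpha m$ you need $|T|\ge 2m$ but get only $|T|\ge m$. Your proposed fixes (``min-congestion witness assignment,'' ``residual reachability set of an overloaded witness'') are names, not arguments, and likewise ``internal edges of $S$ only help'' is an intuition that still needs a formal accounting device.

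The paper's Lemma~\ref{lem:KtG} supplies exactly the two missing devices. First, it replaces $Z$ by $Z'=Z\setminus N(Z)$: every vertex of $Z\cap N(Z)$ is already in $N(Z)$ and is counted there at weight $1\ge\tfrac1t$, while $Z'$ has no edges into $Z$ at all --- this disposes of your internal-edge problem cleanly. Second, instead of a fixed degree threshold it runs an \emph{adaptive} peeling on the bipartite graph between $Z'$ and $N(Z')$: repeatedly delete any $y\in N(Z')$ having at most $t$ \emph{current} neighbors in $Z'$, together with those neighbors. At termination, every surviving vertex of $N(Z'')$ has more than $t$ surviving neighbors, which forces the surviving core $Z''$ into $H_t(G)$ (a surviving locally superior vertex of degree $\le t$ would have a surviving neighbor of degree $\le t$ that the peeling should have removed) and, crucially, survivors retain their full $G$-degree, so Fact~\ref{fact} yields $|N(Z'')|\ge|Z''|/t$. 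The payoff of peeling --- and the point your static split cannot reproduce --- is disjointness: the deleted witnesses are vertex-disjoint from $N(Z'')$, so the contributions add up, $|N(Z)|\ge|Z\setminus Z'|+\tfrac1t|Z'\setminus Z''|+\tfrac1t|Z''|\ge\tfrac1t|Z|$. In short, the inequality $\bind(L(G))\ge\tfrac1t$ you assert is true, but your proof of it has a real gap, and the missing idea is the adaptive peeling combined with the $Z'=Z\setminus N(Z)$ reduction.
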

As an immediate consequence, this shows that the approximation factor of the algorithm presented in \cite{Jowhari23} for planar graphs is in fact $3\pm\epsilon$. 
In a similar fashion, this leads to improved bounds for approximating $\nu(G)$ when $G$ is a bounded arboricity graph in the vertex-arrival streams. In the vertex-arrival model (also known as the adjacency list model), the input graph is given as a sequence of vertices in arbitrary order. Each vertex is also accompanied with a list of its neighbors. In the edge-arrival model (also known as the arbitrary order model), the input graph is given as a sequence of edges in arbitrary order. 
%The above inequality is tight. A tight example for the right hand side is a path of length $2$ while 
%a triangle shows the tightness of the left hand side. 

\paragraph{Another $3$ factor approximation.} The work by 
Esfandiari \etal \cite{EHLMO15} were the first to demonstrate that the size of the maximum matching in bounded arboricity graphs can be approximated within a constant factor by considering only the local neighborhood of the vertices and edges of the graph. More precisely, given a simple graph $G$, let $H_{t}(G)$ denote the set of vertices with degree greater than
$t$ and let $S_t(G)$ denote the set of edges in $G$ with both
endpoints having degree
at most $t$. By applying an extension of the Hall's theorem for bipartite graphs (see  \cite[Lemma~3.1]{EHLMO15}), Esfandiari \etal showed that 
$\nu(G) \le |S_{(2\alpha+3)}(G)|+|H_{(2\alpha+3)}(G)| \le (5\alpha+9)\nu(G)$ when the arboricity of $G$ is bounded by $\alpha$. In particular, 
since the arboricity of planar graphs is bounded by $3$, setting $\alpha=3$, this estimator gives a $24$ factor approximation of the matching size in planar graphs.
As another application
of Theorem \ref{thm:main}, we prove the following inequality which gives a much better estimation of the matching size in planar graphs. 
\begin{theorem}
\label{thm:shallowhigh}
For a planar graph $G$, we have 
$\nu(G) \le |S_2(G)| + |H_2(G)| \le 3\nu(G)$. 
\end{theorem}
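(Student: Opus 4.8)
The plan is to prove the two inequalities separately, deriving the upper bound $s+h \le 3\nu(G)$ from Theorem~\ref{thm:main} applied to a carefully chosen vertex set.

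For the lower bound $\nu(G) \le |S_2(G)| + |H_2(G)|$, I would argue by a direct charging of a maximum matching $M$. Given an edge $e=uv \in M$, if both endpoints have degree at most $2$ then $e \in S_2(G)$ and I charge $e$ to itself; otherwise some endpoint lies in $H_2(G)$ and I charge $e$ to that endpoint. Since distinct matching edges are vertex-disjoint, each vertex of $H_2(G)$ receives at most one charge, each edge of $S_2(G)$ receives at most one charge, and the two target sets live in disjoint universes; hence the charging is injective and $|M| \le |S_2(G)| + |H_2(G)|$. This direction uses neither planarity nor Theorem~\ref{thm:main}.

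For the upper bound I would introduce the vertex set $X = H \cup L_a$, where $H = H_2(G)$ is the set of vertices of degree at least $3$ and $L_a$ is the set of low-degree vertices (degree at most $2$) that are incident to at least one edge of $S_2(G)$; equivalently, $L_a$ is the set of non-isolated vertices of the subgraph induced on the low-degree vertices, which is a disjoint union of paths and cycles. A short count then gives $|X| \ge s + h$: each cycle component on $q$ edges contributes $q$ vertices to $L_a$ and each path component on $q$ edges contributes $q+1$ vertices, so $|L_a| \ge |S_2(G)| = s$, while $|H| = h$ and $H \cap L_a = \emptyset$. Granting that $\bind(X) \ge \tfrac12$, Theorem~\ref{thm:main} with $k=2$ (that is, $c = \tfrac12$) yields $\nu(G) \ge \frac{c}{c+1}|X| = \frac13 |X| \ge \frac{s+h}{3}$, which is the desired bound; note that only the binding inequality at threshold $\tfrac12$ is needed, so $\bind(X) \ge \tfrac12$ suffices.

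The hard part will be establishing $\bind(X) \ge \tfrac12$, i.e.\ that $|N(S)| \ge \tfrac12|S|$ for every $S \subseteq X$ with $N(S) \ne V(G)$. Writing $T = N(S)$, I would first remove the overlap between $S$ and $T$ by splitting each vertex of $S \cap T$ into an $S$-copy and a $T$-copy, which produces a bipartite planar graph on parts of sizes $|S|$ and $|T|$ and lets me apply the bipartite planar edge bound $2(|S|+|T|) - 4$. For the high-degree part $S \cap H$ this reproduces the minimum-degree-$3$ calculation (of the shape $3a \le 2a + 2b - 4$, with $a,b$ the two side sizes) that forces a neighborhood of size at least half, exactly matching the tight bipartite-planar extremal configuration. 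The genuine obstacle is the low-degree part $S \cap L_a$: these vertices have degree only $1$ or $2$, so the naive edge count does not force their neighborhood to be large. Here I would exploit the defining property of $L_a$, namely that every such vertex has a \emph{low-degree} neighbor and hence sits on a shallow path or cycle: within a path or cycle the neighborhood of any subset is at least half its size, and, crucially, the requirement of a low-degree neighbor rules out the planar but neighborhood-poor configurations (such as many degree-$2$ vertices all sharing the two high-degree vertices of a $K_{2,t}$). The technical heart of the argument is to combine the high-degree bipartite-planar bound with the shallow path/cycle bound into a single global estimate that still yields the ratio $\tfrac12$ even when the neighborhoods contributed by $S \cap H$ and by $S \cap L_a$ overlap.
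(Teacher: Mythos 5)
Your lower-bound argument and your counting $|X| \ge s+h$ for $X = H_2(G) \cup L_a$ are correct, and your overall route is genuinely different from the paper's: the paper does not apply Theorem~\ref{thm:main} to a single set covering both $s$ and $h$. Instead it inducts on the number of edges, deleting the endpoints of a shallow edge (which removes at most $3$ elements of $S_2 \cup H_2$ while increasing $\nu$ by $1$), and only in the base case $S_2(G)=\emptyset$ does it invoke its Lemma~\ref{lem:KtG}, i.e.\ the binding bound for $K_2(G)=L_2(G)\cup H_2(G)$. That detour is forced because $K_2(G)$ undercounts shallow components (in an isolated path on $3$ vertices, only the middle vertex is locally superior, so $|K_2|$ can be smaller than $s+h$); your set $L_a$ repairs exactly this, which is a nice observation.

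The genuine gap is that your key claim $\bind(X)\ge \tfrac12$ is never proved, and the two ingredients you offer cannot by themselves close it. The bipartite-planar count gives $|N(S\cap H)| \ge \tfrac12|S\cap H|+2$ and the degree-$\le 2$ sharing argument gives $|N(S\cap L_a)| \ge \tfrac12|S\cap L_a|$, but when the two neighborhoods overlap (e.g.\ a degree-$2$ vertex with one neighbor in $S\cap H$ and one shallow neighbor in $S\cap L_a$) you only get the maximum of the two bounds, not their sum, and you acknowledge this without supplying the combining mechanism. Also, your splitting of $S\cap N(S)$ into two copies is not obviously planarity-preserving: an edge joining two overlap vertices would be duplicated into a crossing pair; the clean fix is the paper's move of passing to $Z'=Z\setminus N(Z)$, which costs nothing since every vertex of $Z\cap N(Z)$ already lies in $N(Z)$. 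The good news is that your claim is true, and the paper's own peel-and-charge argument from Lemma~\ref{lem:KtG} adapts verbatim to your $X$: given $Z\subseteq X$, set $Z'=Z\setminus N(Z)$, then repeatedly delete any neighbor-side vertex having at most $2$ remaining neighbors in the $Z$-side, charging those neighbors to it. Any surviving vertex $x\in Z''\cap L_a$ would force its shallow neighbor $y$ (which survives, since deleting $y$ would have deleted $x$) to have at least $3$ surviving neighbors although $\deg_G(y)\le 2$ --- a contradiction --- so $Z''\subseteq H_2(G)$ with all its $G$-edges intact, and $3|Z''| \le |E''| \le 2(|Z''|+|N(Z'')|)-4$ finishes the estimate. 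With that lemma inserted, your proof goes through and is arguably cleaner than the paper's induction; without it, the technical heart of the upper bound is missing.
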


\paragraph{Matchings in planar graphs with minimum degree $3$.} It is known that a planar graph
on $n$ vertices with minimum degree $3$ has a matching of size at least $n/3$. This fact was first discovered by Nishizeki and Baybars \cite{NishizekiB79}. The proof in \cite{NishizekiB79} is based on the famous Tutte--Berge formula \cite{TB58}. A different proof, again based on Tutte--Berge formula, for 3-connected planar graphs is given in \cite{BiedlDDFK04}. Biedl \cite{B19} presented 
a linear-time algorithm for finding a matching of size $n/3$ in planar graph using an alternative proof.
A straightforward consequence of Theorem~\ref{thm:shallowhigh} leads to another proof of this fact, which might be of independent interest. It is worth noting that in a graph $G$ with a minimum degree of $3$, the set $S_2(G)$ is empty.

\section{Proof of Theorem \ref{thm:main}}
We begin by stating two lemmas that are used in our subsequent arguments.

\begin{lemma} 
\label{lem:richness} Let  $G$ a simple graph and $X \subseteq V(G)$ be a set of vertices with $\bind(X)=1/k$ for some integer $k \ge 1$. 
Then there exists a function $f:X \rightarrow N(X)$ with the following properties:
\begin{itemize}
\item for all $x \in X$, $f(x)$ is an adjacent vertex of $x$,
\item for all $y \in N(X)$, $|\{x : \; f(x) = y\}| \le k$.
\end{itemize}
\end{lemma}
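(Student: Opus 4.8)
The statement is a defect-version of Hall's theorem: I want a function $f:X \to N(X)$ that maps each $x$ to a neighbor, with no vertex $y \in N(X)$ receiving more than $k$ preimages. This is exactly a semi-matching / degree-constrained assignment, so the natural plan is to phrase it as a flow problem and invoke the max-flow min-cut theorem (which the authors flag as their main tool).

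Let me think about what the hypothesis $\bind(X) = 1/k$ gives me. For every $S \subseteq X$ with $N(S) \neq V(G)$, we have $|N(S)| \ge |S|/k$. I want to deduce the existence of the function $f$.

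**The flow network.**

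I'll build a bipartite-style network. Create a source $\mathsf{s}$, a sink $\mathsf{t}$.
- Source $\mathsf{s}$ connects to each $x \in X$ with capacity $1$.
- Each $x \in X$ connects to each neighbor $y \in N(x)$ with capacity... I want each $x$ to send one unit, so edge capacity $\infty$ (or $1$) from $x$ to its neighbors.
- Each $y \in N(X)$ connects to $\mathsf{t}$ with capacity $k$.

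An integral $\mathsf{s}$–$\mathsf{t}$ flow of value $|X|$ saturates all source edges, meaning every $x$ ships one unit to some neighbor $y$ (defining $f(x)=y$), and the capacity-$k$ sink edges ensure each $y$ receives at most $k$ units. Since all capacities are integral, max-flow min-cut gives an integral max flow. So I need: max flow $= |X|$, equivalently every min cut has capacity $\ge |X|$.

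**Analyzing the min cut — the main step.**

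Let me identify a min cut. A cut is determined by which $x$'s and which $y$'s are on the source side. If I put $x$ on the source side, I must either pay for its source edge (if on sink side) or, being on source side, I must cut all its $x\to y$ edges unless those $y$ are also on the source side. Because $x\to y$ edges have infinite capacity, a finite cut must place, for every $x$ on the source side, all of $N(x)$ on the source side too.

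So a finite cut corresponds to choosing $S \subseteq X$ to live on the source side, which forces $N(S)$ onto the source side. The cut cost is: $|X \setminus S|$ (source edges of the $x$'s left on the sink side) plus $k\,|N(S)|$ (sink edges of the $y \in N(S)$, now severed from $\mathsf{t}$). I want this $\ge |X|$ for all valid $S$, i.e.
$$ k\,|N(S)| \ge |S|. $$
This is precisely $|N(S)| \ge |S|/k$, the binding-number hypothesis!

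**The subtlety with the $N(S)=V(G)$ exception.**

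Here is where I expect the real obstacle. The binding condition $|N(S)| \ge |S|/k$ only holds for $S$ with $N(S) \neq V(G)$. If $N(S) = V(G)$, the hypothesis says nothing. So I must handle the case where the source-side set $S$ has $N(S) = V(G)$ separately. But note: if $N(S) = V(G)$ then $N(S) \supseteq X$, and I'd be paying $k|V(G)| \ge k|X| \ge |X|$ (since $k \ge 1$), so that cut is automatically at least $|X|$ — good, that borderline case is fine. I would make this explicit: either use the binding inequality when $N(S) \neq V(G)$, or observe $|N(S)| = |V(G)| \ge |X| \ge |S|/k$ directly when $N(S) = V(G)$. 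In both branches the cut cost is $\ge |X|$.

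Thus every finite cut costs at least $|X|$, so the min cut is $|X|$ (the trivial cut saturating all source edges attains it), the max flow equals $|X|$, integrality yields the desired $f$, and the capacity-$k$ sink edges give the bound $|\{x : f(x)=y\}| \le k$. I would close by restating these two conclusions to match the lemma. The only thing requiring care is the infinite-capacity bookkeeping and the $N(S)=V(G)$ edge case; the flow construction itself is standard.
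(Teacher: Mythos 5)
Your proposal is correct and follows essentially the same route as the paper: the identical flow network (unit-capacity source edges into $X$, infinite-capacity edges to neighbors, capacity-$k$ edges from $N(X)$ to the sink), the same min-cut analysis showing every finite cut costs at least $|X|$ via the binding-number inequality, and the same handling of the $N(S)=V(G)$ exception by noting the cut then already pays $k|V(G)|\ge |X|$. Nothing is missing; integrality of the max flow yields $f$ exactly as in the paper.
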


\begin{proof} The proof is a straightforward generalization of the proof of Hall's theorem using the celebrated max-flow min-cut theorem. We construct a single-source, single-destination flow network $F$ as follows. The vertex set of $F$ is $A \cup B \cup \{s,t\}$ where $s$ and $t$ are the source and the sink vertices, respectively. For each $x$ in $X$, we put a corresponding vertex in $A$. Likewise, for each $y \in N(X)$, we put a corresponding vertex in $B$. For each $x \in A$ and $y \in B$, we add the edge $(x,y)$ if and only if $y$ is an adjacent vertex of $x$ in $G$. The capacity of these edges is set to $\infty$  (to be precise, a value greater than $|X|$ suffices here). We add an edge from the source $s$ to all the vertices in $A$ with capacity $1$. Finally, we add an edge from every vertex in $B$ to the sink $t$ with capacity $k$. 

We claim the value of the maximum $s$-$t$ flow in $F$ is $|X|$. This claim implies the existence of an integral $s$-$t$ flow of value $|X|$. As a result, the flow on the edges between  $A$ and $B$ gives us the assignment function $f$ with the desired properties.

With regard to our maximum flow claim, suppose the $s$-$t$ cut $(S,T)$, where $S = \{s\}\cup A'\cup B'$ and $T= A''\cup B''\cup \{t\}$, is a minimum $s$-$t$ cut in $F$. Note that here  $A' \cup A'' = A$ and $B' \cup B'' = B$. By the max-flow min-cut theorem, the capacity of the $(S,T)$ cut equals the value of the maximum flow which is at most $|X|$. 
From this, we conclude that $(S,T)$ should not cut any of the infinity edges between $A$ and $B$. Hence the capacity of the cut is exactly $|A''| + k|B'|$. Also by the same observation, we have $N(A') \subseteq B'$. Hence
$$c(S,T) = |A''| + k|B'| \ge |A''| + k |N(A')| .$$
If $N(A') = V(G)$, then $|B'| = |V(G)|$. In this case, clearly $c(S,T) \ge k|B'| \ge |X|$.  Therefore, we may safely assume that $N(A') \neq V(G)$. By the fact that $X$ is a vertex set with binding number $1/k$, we obtain
$$c(S,T) \ge |A''| + k |N(A')| \ge |A''| + |A'| = |A| = |X|.  $$
Consequently, the value of the maximum $s$-$t$ flow is $|X|$, as claimed. This finishes the proof.
\end{proof}

\begin{lemma} 
\label{lem:max3}
Let $k\ge 3$. The following statements are true.
\begin{itemize}
\item Let $T$ be a tree of order $n$ with maximum degree $k$. Then $\nu(T) \ge \frac{n-1}k.$
\item Let $G$ be a connected graph of order $n$ that has exactly one cycle and its maximum degree is 
$k$. We have $\nu(G) \ge \frac{n}k$. Here, $G$ may have parallel edges, in which case, we count the only pair of parallel edges as a cycle.
\end{itemize}
\end{lemma}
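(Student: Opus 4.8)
The plan is to prove both parts by induction on the order $n$, using a single ``peel the deepest leaf's parent'' reduction in which one matching edge is created while at most $k$ vertices are deleted; the arithmetic then closes precisely because every vertex has degree at most $k$.

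For the tree, I would root $T$ at an arbitrary vertex $r$ and take a leaf $v$ of maximum depth, with parent $u$. If $u = r$ then $T$ is a star (the case $n=1$ being trivial) and the bound $\nu(T) = 1 \ge \deg(r)/k = (n-1)/k$ is immediate since $\deg(r) \le k$. Otherwise $u$ has a parent $w$, and by maximality of the depth of $v$ every child of $u$ is a leaf. I would add $uv$ to the matching and delete $u$ together with all its (leaf) children, a total of $\deg(u) \le k$ vertices, leaving a tree $T'$ on $n - \deg(u)$ vertices whose maximum degree is still at most $k$ and which is still connected (the remainder hangs off $w$). The induction hypothesis gives $\nu(T') \ge (n-\deg(u)-1)/k$, hence $\nu(T) \ge 1 + (n-\deg(u)-1)/k \ge (n-1)/k$, the last step being equivalent to $\deg(u) \le k$.

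For a connected unicyclic $G$ with unique cycle $C$, the base case is $G = C$ (a genuine cycle, or the two-vertex double edge in the multigraph case), where $\nu(C) = \lfloor n/2 \rfloor$; here I would check $\lfloor n/2 \rfloor \ge n/k$, which holds for all $n \ge 3$ precisely because $k \ge 3$ (and the double-edge case $n = 2$ gives $\nu = 1 \ge 2/k$). If $G$ is not a pure cycle it has a leaf, and since cycle vertices have degree at least $2$, every leaf lies off $C$. Measuring depth as distance to $C$, I would take a deepest leaf $v$ and its parent $u$ (its unique neighbor nearer $C$); again all children of $u$ are leaves. Matching $uv$ and deleting $u$ with its leaf children, there are two cases. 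If $u \notin C$, the deletion removes a pendant subtree of $\deg(u)$ vertices, leaving a smaller connected unicyclic $G'$ on $n - \deg(u)$ vertices, and the induction hypothesis yields $\nu(G) \ge 1 + (n-\deg(u))/k \ge n/k$. If $u \in C$, deleting $u$ breaks the cycle, so $G'$ is a tree on $n - \deg(u) + 1$ vertices; invoking the already-proved tree bound gives $\nu(G') \ge (n-\deg(u))/k$ and hence again $\nu(G) \ge 1 + (n-\deg(u))/k \ge n/k$. Both inequalities reduce to $\deg(u) \le k$.

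I expect the tree induction to be routine, and the main obstacle to be the bookkeeping for the unicyclic step: one must verify that each reduction produces a valid smaller instance (connected, of the right type, with maximum degree still at most $k$), that a deepest off-cycle leaf exists whenever $G$ is not a pure cycle, and that the parallel-edge convention is handled correctly in the on-cycle case (removing a degree-$\deg(u)$ cycle vertex carrying leaf children destroys the unique cycle and leaves exactly $n - \deg(u) + 1$ vertices). The other delicate point is simply that the cycle base case needs $k \ge 3$, which is where that hypothesis is actually used.
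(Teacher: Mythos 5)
Your proposal is correct, and for the unicyclic part it is essentially the paper's own argument: induct on $n$, peel off a pendant star (one matching edge, at most $k$ vertices deleted) when the matched vertex lies off the cycle, and when it lies on the cycle observe that its removal destroys the unique cycle and invoke the tree bound on the remainder; the base case of a pure cycle uses $\lfloor n/2\rfloor \ge n/k$, which is exactly where $k\ge 3$ enters, just as in the paper. The only real divergence is the tree part: the paper simply cites the known fact that a bipartite graph with $m$ edges and maximum degree $k$ has $\nu \ge m/k$ (applied with $m=n-1$), whereas you give a self-contained deepest-leaf induction. Your version is more elementary and unifies both parts under one reduction; the paper's citation is shorter and records the stronger $m/k$ bound, but nothing is gained or lost for the lemma itself.
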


\begin{proof}
	 The first part follows from a more general fact (see \cite[p.~121]{West01}) that in a  bipartite graph $G$ with $m$ edges and maximum degree $k$, we have $\nu(G) \ge m/k$ .  For the second part, we provide a proof by  induction on the number of vertices. Consider a connected graph $G$ of order $n$ that has only one cycle and its maximum degree is $k \ge 3$. The base case where $n=2$ is clearly true. Note that we may view $G$ as a collection of trees $T_1, \ldots, T_r$ that are attached to a simple cycle $C$. See Figure \ref{fig:cycle:trees}\,(a).

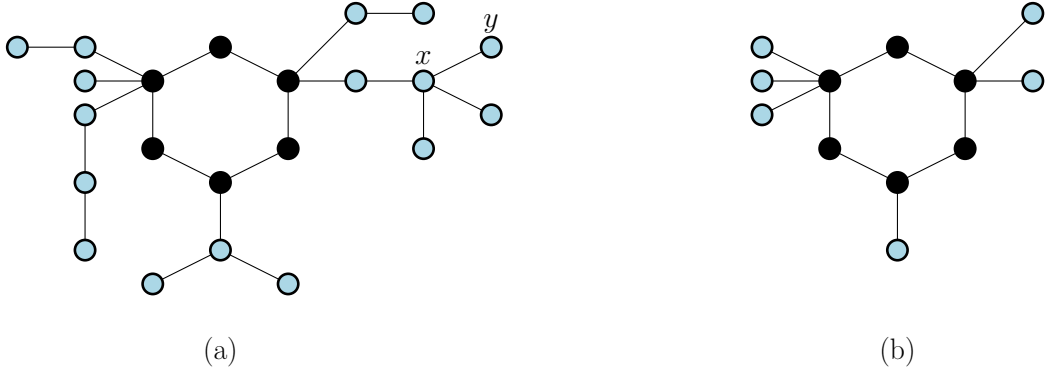
\begin{figure}
\centering
\begin{tikzpicture}[scale=0.45,transform shape]
\Text[x=8, y = 8.7]{\Huge $x$}
\Text[x=10, y = 9.7]{\Huge $y$}
\Vertex[x=2,y=5, color=black]{a1}
\Vertex[x=0,y=6, color=black]{a2}
\Vertex[x=0,y=8, color=black]{a3}
\Vertex[x=2,y=9, color=black]{a4}
\Vertex[x=4,y=8, color=black]{a5}
\Vertex[x=4,y=6, color=black]{a6}
\Vertex[x=2,y=3]{t11}
\Vertex[x=4,y=2]{t12}
\Vertex[x=0,y=2]{t13}
\Vertex[x=6,y=8]{t51}
\Vertex[x=8,y=8]{t52}
\Vertex[x=10,y=9]{t53}
\Vertex[x=10,y=7]{t54}
\Vertex[x=8,y=6]{t55}
\Vertex[x=6,y=10]{t61}
\Vertex[x=8, y=10]{t62}
\Vertex[x=-2, y = 8]{t31}
\Vertex[x=-2, y = 7]{t32}
\Vertex[x=-2, y = 9]{t33}
\Vertex[x=-4, y = 9]{t34}
\Vertex[x=-2, y = 5]{t35}
\Vertex[x=-2, y = 3]{t36}

\draw (a3) -- (t31);
\draw (a3) -- (t32);
\draw (a3) -- (t33);
\draw (t33) -- (t34);
\draw (t32) -- (t35);
\draw (t35) -- (t36);
% \Vertex[x=8,y=8]{t62}
% \Vertex[x=10,y=9]{t63}
% \Vertex[x=10,y=7]{t54}
% \Vertex[x=8,y=6]{t55}
\draw (a1) -- (a2);
\draw (a2) -- (a3);
\draw (a3) -- (a4);
\draw (a4) -- (a5);
\draw (a5) -- (a6);
\draw (a6) -- (a1);

\draw (a1) -- (t11);
\draw (t12) -- (t11);
\draw (t13) -- (t11);

\draw (a5) -- (t51);
\draw (t51) -- (t52);
\draw (t52) -- (t53);
\draw (t52) -- (t54);
\draw (t52) -- (t55);

\draw (a5) -- (t61);
\draw (t61) -- (t62);
\Vertex[x=22,y=5, color=black]{b1}
\Vertex[x=20,y=6, color=black]{b2}
\Vertex[x=20,y=8, color=black]{b3}
\Vertex[x=22,y=9, color=black]{b4}
\Vertex[x=24,y=8, color=black]{b5}
\Vertex[x=24,y=6, color=black]{b6}
\Vertex[x=18, y = 8]{b31}
\Vertex[x=18, y = 7]{b32}
\Vertex[x=18, y = 9]{b33}

\Vertex[x=22, y = 3]{b11}
\Vertex[x=26, y = 8]{b51}
\Vertex[x=26, y = 10]{b52}

\draw (b3) -- (b31);
\draw (b3) -- (b32);
\draw (b3) -- (b33);
\draw (b1) -- (b11);
\draw (b5) -- (b51);
\draw (b5) -- (b52);
\draw (b1) -- (b2);
\draw (b2) -- (b3);
\draw (b3) -- (b4);
\draw (b4) -- (b5);
\draw (b5) -- (b6);
\draw (b6) -- (b1);

\Text[x=2,y=0] { \Huge $\rm(a)$};
\Text[x=22,y=0] { \Huge $\rm(b)$};
\end{tikzpicture}
\caption{Instances of graphs discussed in the proof of Lemma ~\ref{lem:max3}. 
(a) A cycle with trees attached to it. (b) A cycle with single edges
attached to it.}
\label{fig:cycle:trees}
\end{figure}

Suppose that, for some $i$, there is a vertex $x\in V(T_i)\setminus V(C)$ such that $\deg(x)\ge 2$. 
We may assume that $x$ has only one neighbor of degree $\ge2$ and $d \le k-1$ neighbors of degree $1$. Let $y$ be one of the leaf neighbors. We pick the edge $xy$ as a matching edge and remove $x$ and its leaf neighbors from $G$. What remains is a connected graph $G'$ of order $n-(d+1)$ that has only one cycle and maximum degree $\le k$. By induction hypothesis, we have $\nu(G')\ge \frac{n-(d+1)}k$. Therefore,
$\nu(G) \ge 1+ \frac{n-(d+1)}k = \frac{k}k + \frac{n-d-1}k \ge \frac{n}k$.

Now suppose there no such vertex $x$ with the conditions mentioned above. In this case, the trees $T_1, \ldots, T_r$ are single edges attached to the cycles $C$. See Figure \ref{fig:cycle:trees}\,(b).
We distinguish two cases. In one case $G$ is a simple cycle (the trees are empty). Every cycle of order $n$ has a matching of size at least $\frac{n-1}2 \ge \frac{n}k$. Next, suppose that there are some edges attached to the cycle $C$. 
Let $z$ be a vertex in the cycle $C$ that has a neighbor $y$ of degree $1$. We pick the edge $zy$ as a matching edge and remove $z$ and its leaf neighbors from $G$. Let $d \le k-1$ be the number of vertices removed. What is left is a tree $T'$ of order $n-d$ and maximum degree $\le k$. By the statement in the first part, we have $\nu(T') \ge \frac{n-d-1}k$. 
Therefore $\nu(G) \ge 1 + \frac{n-d-1}k = \frac{k}k + \frac{n-d-1}k \ge \frac{n}k$. This finishes the proof.
\end{proof}

We are now prepared to prove our main result. 

\begin{proof}[\bf Proof of Theorem \ref{thm:main}]
Since $\bind(X)=1/k$, by Lemma~\ref{lem:richness}, there exists a function $f:X \rightarrow N(X)$ with the properties described in the statement of the lemma. Let us denote  the image of $X$ under $f$ by $f(X)$.

Consider a directed graph $\vec{H}$ defined in the following way: $V(\vec{H}) = X \cup f(X)$ and there is an edge $(x,y) \in E(\vec{H})$ if  $f(x)=y$. We define the undirected version of $\vec{H}$, denoted by $H$, by simply dropping the directions of the edges in $\vec{H}$. Note that $H$ may have parallel edges.
 However, if we exclude these parallel edges, $H$ is indeed a subgraph of $G$. Therefore, any matching of $H$ is also a matching in $G$.
 Let $H_1, \ldots, H_t$ be the components of $H$ and let $\vec{H}_1, \ldots, \vec{H}_t$ be the directed counterparts of these components. We set $X_j := X \cap V(\vec{H}_j)$ for $j=1,\ldots, t$.
Let us fix an arbitrary $i \in \{1, \dots, t\}$. We claim that $H_i$ has a matching of size at least $|X_i|/(k+1)$. Since the components are vertex-disjoint and we have $|X_1|+\cdots+|X_t| = |X|$, proving this claim is sufficient to establish the theorem.
To prove our claim, consider the directed component $\vec{H}_i$. 

First assume that $\vec{H}_i$ is a DAG (directed acyclic graph). Every DAG has a vertex with out-degree zero. Let $u$ be such a vertex in $\vec{H}_i$. 
The vertex $u$ cannot be in $X$ by the definition of $\vec{H}$. Hence
$|X_i| \le |V(\vec{H}_i)| -1$. Note that since $\vec{H}_i$ is a DAG, the undirected version $H_i$ has no parallel edges. Moreover, its maximum degree is $k+1$. By Lemma \ref{lem:max3} and the previous observation, it follows that $\nu(H_i) \ge \frac{|V(H_i)| -1}{k+1} = \frac{|V(\vec{H}_i)| -1}{k+1} \ge \frac{|X_i|}{k+1}$.

Therefore, for the rest of the proof, we assume that $\vec{H}_i$ has a directed cycle. In this case, we will demonstrate that $H_i$ has a particular structure. Specifically, we will show that $H_i$ contains only one cycle (where we consider parallel edges as a cycle). To achieve this, we need to make a few observations.

\begin{observation} 
\label{obs:directedcycle} 
Let $ e_1,\ldots, e_r$ be a sequence of edges that forms a cycle $C$ in $H_i$. Let
$\vec{C}$ be the same sequence of edges with the directions restored. Then $\vec{C}$ must be a directed cycle. 
\end{observation}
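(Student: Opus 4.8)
The plan is to exploit the single defining property of the orientation in $\vec{H}$: since $f$ is a \emph{function}, every vertex $x \in X$ emits exactly one arc (pointing to $f(x)$), while every vertex of $f(X)\setminus X$ emits none. Consequently every vertex of $\vec{H}$ has out-degree at most $1$. This is the only structural fact the argument needs, and everything else is a counting consequence of it.

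First I would write the cycle as $C = v_1 v_2 \cdots v_r v_1$ with $e_j = \{v_j, v_{j+1}\}$ (indices taken modulo $r$), and restore the orientation of each $e_j$ to obtain $\vec{C}$. Each directed edge has a well-defined \emph{tail}, the endpoint out of which it points, and this tail is one of the two cycle vertices incident to that edge. There are $r$ edges, hence exactly $r$ tails in total, and all of them sit on the $r$ cycle vertices.

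Next I would run the counting argument. If some cycle vertex $v_j$ were the tail of both of its incident cycle edges $e_{j-1}$ and $e_j$, then $v_j$ would have out-degree at least $2$ in $\vec{H}$, contradicting the out-degree bound. So each cycle vertex hosts at most one tail; distributing $r$ tails among $r$ vertices with at most one per vertex forces exactly one tail at every vertex. In particular every cycle vertex then has out-degree exactly $1$, so it lies in $X$, which may itself be a useful byproduct.

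Finally, exactly one tail per vertex means that at each $v_j$ precisely one incident cycle edge points out and the other points in. Fixing the orientation of $e_1$, say $v_1 \to v_2$, forces $e_2$ to leave $v_2$ (its incoming arc is already $e_1$), hence $v_2 \to v_3$, and inductively $v_j \to v_{j+1}$ all the way around, so $\vec{C}$ is a consistently oriented directed cycle. The parallel-edge case $r=2$ works identically: the two arcs must be $f(x)=y$ and $f(y)=x$, i.e. $x \to y$ and $y \to x$, again a directed cycle. I do not anticipate a genuine obstacle here; the only points demanding care are handling parallel edges and confirming that the tail count is tight, both of which follow immediately from the out-degree $\le 1$ property.
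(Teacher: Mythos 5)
Your proposal is correct and takes essentially the same approach as the paper: both arguments rest on the fact that every vertex of $\vec{H}$ has out-degree at most $1$ (since $f$ is a function), and both conclude that an inconsistently oriented cycle would produce a vertex with two outgoing edges. Your tail-counting and induction merely spell out the step the paper asserts directly, and your handling of the parallel-edge case $r=2$ matches the paper's convention.
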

\begin{proof} Note that $\vec{C}$ has at least two edges. If $\vec{C}$ is not a directed cycle, then there must be a vertex in $\vec{C}$
with two outgoing edges. This contradicts the definition of $\vec{H}$. 
\end{proof}
\begin{observation} 
\label{obs:directedpath} 
Let $C$ be a cycle in $H_i$ and  $x\in V(H_i)\setminus V(C)$. Let $P$
be a simple path that connects $x$ to $C$. Then all the edges in $\vec{P}$ must be directed
toward the cycle. 
\end{observation}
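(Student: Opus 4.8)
The plan is to exploit the single defining property of the orientation $\vec{H}$: because its edges come from the function $f$, every vertex has out-degree at most one. Indeed, each $x \in X$ contributes exactly the one edge $(x,f(x))$, while any vertex lying outside $X$ contributes none. This out-degree bound, together with Observation~\ref{obs:directedcycle}, is essentially all that is needed.

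First I would fix notation for the path. Write $P$ as $x = v_0, v_1, \ldots, v_m = w$, where $w \in V(C)$ is the endpoint of $P$ on the cycle; since $P$ is simple and connects $x$ to $C$, we may take $w$ to be the first (and only) vertex of $P$ lying on $C$, so that $v_0, \ldots, v_{m-1} \notin V(C)$. The goal is then to show that in $\vec{P}$ every edge is oriented as $v_{j-1} \to v_j$, that is, toward $w$ and hence toward the cycle.

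The argument is a backward induction along $P$, starting at the cycle. By Observation~\ref{obs:directedcycle} the restored cycle $\vec{C}$ is a directed cycle, so $w$ already has an outgoing edge belonging to $\vec{C}$. Because every vertex of $\vec{H}$ has out-degree at most one, $w$ can have no second outgoing edge, and therefore the last path edge $v_{m-1}v_m$ must be oriented $v_{m-1} \to v_m$. This in turn gives $v_{m-1}$ an outgoing edge (namely along $P$), so by the same out-degree bound the edge $v_{m-2}v_{m-1}$ must point into $v_{m-1}$, i.e.\ $v_{m-2} \to v_{m-1}$. Iterating this step down to $v_0 = x$ shows that every edge of $\vec{P}$ is directed toward the cycle, as claimed.

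As for difficulty, I expect essentially no obstacle: the statement follows mechanically once the out-degree-at-most-one property is isolated. The only point requiring a little care is the base of the induction, namely confirming that the outgoing cycle edge at $w$ is genuinely distinct from the path edge $v_{m-1}v_m$. This holds because $v_{m-1} \notin V(C)$, so $v_{m-1}v_m$ is not a cycle edge; were $v_{m-1}v_m$ also oriented out of $w$, the vertex $w$ would have out-degree two, contradicting the construction of $\vec{H}$.
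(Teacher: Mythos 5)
Your proposal is correct and follows essentially the same route as the paper: both rest on the observation that every vertex of $\vec{H}$ has out-degree at most one (only vertices of $X$ emit an edge, and exactly one), use Observation~\ref{obs:directedcycle} to see that the cycle endpoint already has its outgoing edge on $\vec{C}$, and then propagate the orientation backward along $P$. Your version merely makes explicit, as a backward induction with the base case checked carefully, what the paper compresses into the sentence ``otherwise we can find a vertex with two outgoing edges.''
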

\begin{proof} Let $z$ be a vertex in $C$ that is an endpoint of $P$ in $H_i$. The other endpoint is $x$. By Observation~\ref{obs:directedcycle}, $\vec{C}$ must be a directed cycle, and since $z$ cannot have two outgoing edges, the edge $yz$ in $P$ that meets $C$ must point towards $z$. Consequently, all the edges in $\vec{P}$ must point toward $C$, otherwise 
we can find a vertex with two outgoing edges, which is a contradiction.
\end{proof}
Now we are ready to show that $H_i$ has only one cycle. Suppose $H_i$ has two distinct cycles $C_1$ and
$C_2$. We distinguish four cases:
\begin{itemize}
\item[(a)] $C_1$ and $C_2$ are vertex-disjoint. In this case, let $P$ be a simple path that 
connects $C_1$ to $C_2$ in $H_i$. Let $x$ be the starting vertex of $P$ that is in $V(C_1)$. See Figure \ref{fig:contradictions}\,(a). By Observation~\ref{obs:directedpath},the initial edge of $\vec{P}$ must leave from $x$. This indicates that $x$ has two outgoing edges in $\vec{H}_i$, which is a contradiction.
\item[(b)] $C_1$ and $C_2$ share a vertex $x$ but $E(C_1)\cap E(C_2) = \emptyset$. See Figure~\ref{fig:contradictions}\,(b). Again by Observation~\ref{obs:directedpath}, $x$ ends up having two out-going edges, which is a contradiction. 
\item[(c)] $V(C_1) = V(C_2)$ and $E(C_1)\cap E(C_2) \neq \emptyset$. This situation can arise only when $C_1$ and $C_2$ differ in only one parallel edge, as shown in Figure~\ref{fig:contradictions}\,(c). By Observation~\ref{obs:directedcycle}, we reach a contradiction because $\vec{C_1}$ and $\vec{C_2}$ cannot both be directed cycles.
\item[(d)] $V(C_1) \neq V(C_2)$ and $E(C_1)\cap E(C_2) \neq \emptyset$. Let us assume, without loss of generality, that there exists a vertex $x \in V(C_1) \setminus V(C_2)$.  There must be two disjoint paths $P_1$ and $P_2$ that connects $x$ to  $C_2$, as shown in Figure~\ref{fig:contradictions}\,(d). Beginning from $x$, these paths have different starting edges. By  Observation~\ref{obs:directedpath}, both $\vec{P}_1$ and $\vec{P}_2$ should be directed toward $C_2$. This implies that $x$ has two outgoing edges which is a contradiction. 
\end{itemize}
Therefore, we can conclude that $H_i$ contains only one cycle.

From Lemma~\ref{lem:max3}, we can deduce that $H_i$ has a matching of size at least $\frac{ |V(H_i)|}{k+1} \ge \frac{|X_i|}{k+1}$. This completes the proof of Theorem~\ref{thm:main}. 
\end{proof}

\begin{figure}
\centering
\begin{tikzpicture}[scale=0.45,transform shape]

\Vertex[x=1,y=12, color=black]{a1}
\Vertex[x=1,y=14, color=black]{a2}
\Vertex[x=3,y=14, color=black]{a3}
\Vertex[x=3,y=12, color=black]{a4}
\Vertex[x=1,y=5, color=black]{b1}
\Vertex[x=1,y=7, color=black]{b2}
\Vertex[x=3,y=7, color=black]{b3}
\Vertex[x=3,y=5, color=black]{b4}
\draw (a1) -- (a2);
\draw (a2) -- (a3);
\draw (a3) -- (a4);
\draw (a4) -- (a1);
\draw (b1) -- (b2);
\draw (b2) -- (b3);
\draw (b3) -- (b4);
\draw (b4) -- (b1);
\draw (b4) -- (a4);
\Text[x=2.5, y = 9]{\Huge $P$}
\Text[x=2, y = 13]{\Huge $C_1$}
\Text[x=2, y = 6]{\Huge $C_2$}
\Text[x=3.7, y = 12]{\Huge $x$}
\Text[x=2, y = 3]{\Huge $\rm(a)$}
\Text[x=8, y = 3]{\Huge $\rm(b)$}
\Text[x=17, y = 3]{\Huge $\rm(c)$}
\Text[x=27, y =3]{\Huge $\rm(d)$}
\Vertex[x=7,y=13, color=black]{c1}
\Vertex[x=9,y=14, color=black]{c2}
\Vertex[x=9,y=9, color=black]{c3}
\Vertex[x=7,y=6, color=black]{d1}
\Vertex[x=9,y=9, color=black]{d2}
\Vertex[x=9,y=5, color=black]{d3}
\draw (c1) -- (c2);
\draw (c2) -- (c3);
\draw (c3) -- (c1);
\draw (d1) -- (d2);
\draw (d2) -- (d3);
\draw (d3) -- (d1);
\draw (d3) -- (c3);
\Text[x=9.7, y = 9]{\Huge $x$}
\Text[x=8.3, y = 12]{\Huge $C_1$}
\Text[x=8.3, y = 6.5]{\Huge $C_2$}
\Vertex[x=17,y=8, color=black]{e1}
\Vertex[x=15,y=9, color=black]{e2}
\Vertex[x=15,y=11, color=black]{e3}

\Vertex[x=19,y=11, color=black]{e4}
\Vertex[x=19,y=9, color=black]{e5}
\draw (e1) -- (e2);
\draw (e2) -- (e3);
% \draw (e3) -- (e4);
\Edge[Direct, lw=1pt, bend=30](e3)(e4)
\Edge[Direct, lw=1pt, bend=30](e4)(e3)
\draw (e4) -- (e5);
\draw (e5) -- (e1);
\Text[x=16.8, y = 9.5]{\Huge $C_1, C_2$}

\Vertex[x=26,y=8, color=black]{f1}
\Vertex[x=24,y=9, color=black]{f2}
\Vertex[x=24,y=11, color=black]{f3}
% \Vertex[x=17,y=9, color=black]{b4}
\Vertex[x=28,y=11, color=black]{f4}
\Vertex[x=28,y=9, color=black]{f5}
\Vertex[x=30,y=9, color=black]{f6}
\Vertex[x=30,y=11, color=black]{f7}
\draw (f1) -- (f2);
\draw (f2) -- (f3);
\draw (f3) -- (f4);
\draw (f4) -- (f5);
\draw (f5) -- (f1);
\draw (f5) -- (f6);
\draw (f4) -- (f7);
\draw (f7) -- (f6);
\Text[x=26, y = 10]{\Huge $C_1$}
\Text[x=29, y = 10]{\Huge $C_2$}
\Text[x=30.7, y = 11]{\Huge $x$}

\end{tikzpicture}
\caption{A pictorial representation of the contradictory cases in the proof of Theorem \ref{thm:main}.
}
\label{fig:contradictions}
\end{figure}
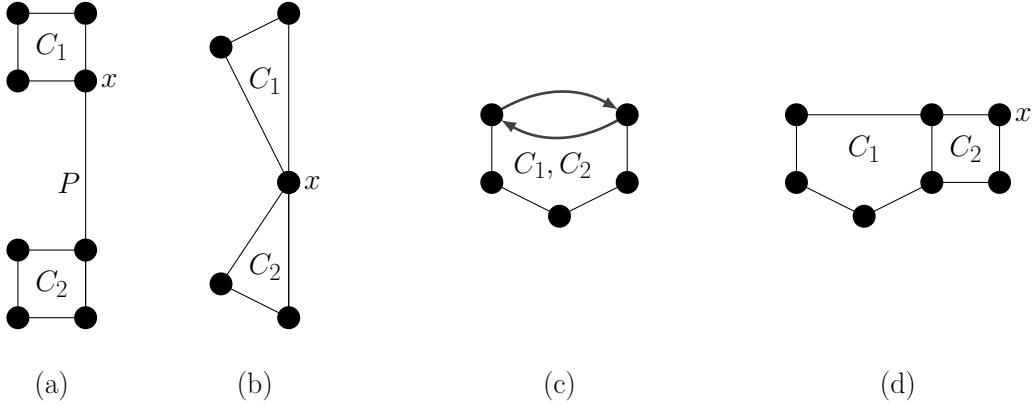

\section{Applications of Theorem \ref{thm:main}}
\label{sec:consequences}

In this section we demonstrate the applications of Theorem~\ref{thm:main}. 
The following fact is well-known and is used in the subsequent proofs. 

\begin{fact}
\label{fact}
Let $G=(V, E)$ be a graph. The following statements are true. 
\begin{itemize}
\item If $G$ is a bipartite planar graph, then $|E| \le 2|V|-4 $. 
\item If the arboricity of $G$ is bounded by $\alpha$, then $|E| \le \alpha|V|$. 
\end{itemize}
\end{fact}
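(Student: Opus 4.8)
The final statement to prove is Fact~\ref{fact}, which consists of two edge-count inequalities. I will treat each bullet separately, as they rely on different standard tools.

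\textbf{The bipartite planar bound.} The plan is to use Euler's formula together with a face-counting (double-counting) argument. First I would dispose of trivial cases: if $G$ has fewer than $3$ vertices or is a forest, the claimed bound $|E| \le 2|V| - 4$ is checked directly. So assume $G$ is a connected bipartite planar graph with $|V| \ge 3$ that contains at least one cycle; the disconnected case follows by summing the bound over components (with care, since the additive constant $-4$ behaves well: one checks that if each component satisfies the bound then so does the whole graph, because merging components only improves the constant). Fix a planar embedding and let $|F|$ denote the number of faces. Euler's formula gives $|V| - |E| + |F| = 2$. The key step is the girth bound: since $G$ is bipartite it has no odd cycles, so every face in the embedding is bounded by a closed walk of length at least $4$. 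Double-counting incidences between edges and faces — each edge borders at most two faces — yields $2|E| \ge 4|F|$, i.e. $|F| \le |E|/2$. Substituting into Euler's formula gives $|E| \le 2|V| - 4$.

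\textbf{The arboricity bound.} The plan is to apply the definition of arboricity directly. If the arboricity of $G$ is at most $\alpha$, then by definition the edge set $E$ can be partitioned into $\alpha$ forests $F_1, \dots, F_\alpha$. Each forest on the vertex set $V$ has at most $|V| - 1 < |V|$ edges, since a forest on $n$ vertices has at most $n-1$ edges. Summing over the $\alpha$ forests gives $|E| = \sum_{i=1}^\alpha |E(F_i)| \le \sum_{i=1}^\alpha (|V| - 1) \le \alpha |V|$, which is the desired inequality.

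\textbf{Main obstacle.} The arboricity part is immediate from definitions, so the only real work is in the bipartite planar bound. The girth-$4$ face-counting argument is routine once a planar embedding is fixed, but the genuinely fiddly point is handling the boundary cases cleanly — namely graphs that are disconnected or contain no cycles (where faces are not all bounded by simple $4$-cycles, and the additive constant $-4$ must be tracked across components). I expect the disconnected/acyclic bookkeeping to be the main place where care is needed, whereas the core inequality $2|E| \ge 4|F|$ for a $2$-connected bipartite plane graph is standard.
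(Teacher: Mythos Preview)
The paper does not supply a proof of this fact at all; it is stated as well-known and used without justification. Your proposal gives the standard textbook arguments (Euler's formula plus girth-$4$ face counting for the bipartite planar bound, and the trivial forest edge count for the arboricity bound), and both are correct in substance. One small caveat: the inequality $|E|\le 2|V|-4$ actually fails when $|V|\le 2$, so those cases cannot be ``checked directly'' as you suggest; this is a harmless imprecision in the fact as stated, and in the paper's only application (the proof of Lemma~\ref{lem:KtG}) only the weaker bound $|E|\le 2|V|$ is used, for which no such caveat is needed.
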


As an application of Theorem \ref{thm:main}, we first prove the following lemma which is a common basis for proving both Theorems~\ref{thm:ls} and \ref{thm:shallowhigh}. For a graph $G$, 
recall that $L(G)$ is the set of locally superior vertices in $G$ and $H_t(G)$ is the set of vertices in $G$ with degree greater than $t$. 
We also define $L_t(G)$ as the set of locally superior vertices in $G$ with degree is at most $t$. In other words, $$L_t(G) = \{x \in L(G) : \deg(x) \le t\}.$$

\begin{lemma} 
\label{lem:KtG}
Let $t \ge 2$ be an integer and $K_t(G):= L_t(G) \cup H_t(G)$. The following statements are true. 
\begin{itemize} 
\item If $G$ is a planar graph, then $ |K_2(G)| \le 3\nu(G)$.
\item If $G$ is a $t$-bounded arborocity graph, then $ |K_t(G)| \le (t+1)\nu(G)$. 
\end{itemize}
\end{lemma}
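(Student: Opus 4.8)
The plan is to deduce both bullets from Theorem~\ref{thm:main} applied with $X = K_t(G)$, so the whole task reduces to verifying the Hall-type hypothesis
$$|N(S)| \ge \tfrac{1}{t}|S| \qquad \text{for every } S \subseteq K_t(G),$$
with the constant $\tfrac12$ in place of $\tfrac1t$ in the planar bullet (where $t=2$). This says exactly that $\bind(K_t(G)) \ge \tfrac1t$, and since the proof of Theorem~\ref{thm:main} consumes nothing more than this inequality (through Lemma~\ref{lem:richness}, whose cut estimate only uses $|N(A')| \ge |A'|/k$), it yields $\nu(G) \ge |K_t(G)|/(t+1)$, which rearranges to $|K_t(G)| \le (t+1)\nu(G)$ and, for $t=2$, to $|K_2(G)| \le 3\nu(G)$.

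To verify the inequality I would fix $S$, set $B = N(S)$, and split $S$ according to adjacency inside $S$. Let $I$ be the set of vertices of $S$ having no neighbour in $S$; then $I$ is independent and all of its neighbours lie in $B\setminus S$, while every vertex of $S\setminus I$ has a neighbour in $S$ and hence lies in $B$. Thus $|B| \ge |S\setminus I| + |B\setminus S| = |S| - |I| + |B\setminus S|$, so it suffices to prove $|I| \le t|B\setminus S|$ (resp. $2|B\setminus S|$): combined with $|I|\le|S|$ this gives $|B| \ge |S| - |I|\bigl(1-\tfrac1t\bigr) \ge \tfrac1t|S|$.

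The bound on $|I|$ is where both hypotheses enter, and I would split $I$ by degree. A vertex $x\in I$ with $\deg(x)\le t$ lies in $K_t(G)$ only by being locally superior, hence has a witness neighbour $y_x$ with $\deg(y_x)\le\deg(x)\le t$; this $y_x$ sits in $B\setminus S$ and has low degree, and a high-degree vertex of $I$ that happens to have a low-degree neighbour can be routed to it as well. Since every low-degree target $y$ has at most $\deg(y)\le t$ neighbours, this assignment into the low-degree part of $B\setminus S$ has multiplicity at most $t$, charging all these vertices with a factor $t$. The remaining vertices of $I$ have degree $>t$ with all neighbours of degree $>t$; they and these neighbours span a bipartite subgraph whose $I$-side has minimum degree $>t$, so the sparsity bounds of Fact~\ref{fact} (arboricity $\le t$, resp. the bipartite-planar bound $|E|\le 2|V|-4$) force their count to be at most $t$ (resp. $2$) times the number of high-degree targets. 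Summing the two estimates gives $|I|\le t|B\setminus S|$.

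The delicate part is the planar bullet's tight constant. Planarity only supplies arboricity $3$, which through the same scheme would give the weaker $|K_3(G)|\le 4\nu(G)$; reaching the constant $2$ requires genuinely invoking the sharp bipartite-planar inequality $|E|\le 2|V|-4$ on the high-degree side, and using the locally-superior property in an essential way on the low-degree side. It is precisely this property that excludes dense $K_{2,n}$-type attachments -- a degree-$2$ vertex all of whose neighbours have large degree is \emph{not} locally superior -- and thereby guarantees a low-degree witness that absorbs multiplicity at most $2$. I expect the main obstacle to be making the low-degree routing and the high-degree counting share a single budget of $t|B\setminus S|$ (resp. $2|B\setminus S|$) without double-charging targets in $B\setminus S$, together with the bookkeeping for degree-$1$ vertices and isolated edges in $S$.
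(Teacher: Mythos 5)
Your proposal is correct and follows essentially the same route as the paper: both reduce the lemma to showing $\bind(K_t(G))\ge \frac1t$ and then invoke Theorem~\ref{thm:main}, both use the locally-superior property to charge vertices to low-degree witness neighbors with multiplicity at most $t$, and both apply Fact~\ref{fact} (with the sharp bipartite-planar bound $|E|\le 2|V|-4$ securing the constant in the $t=2$ case) to a residual bipartite core whose $S$-side has minimum degree at least $t+1$. The only divergence is bookkeeping -- the paper peels targets iteratively from the neighborhood side (removing any $y$ with at most $t$ neighbors in the current set and then showing the residue lies in $H_t(G)$ via local superiority), while you partition the independent part $I$ statically by $G$-degree -- and the obstacles you anticipate (double-charging and degree-$1$ cases) dissolve automatically, since your low-degree and high-degree targets are disjoint by definition.
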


\begin{proof} %Let $t \ge 2$ be an integer.
 Assuming that $G$ is a $t$-bounded arboricity graph or a planar graph, we show that $\bind(K_t(G)) \ge1/t$. Given this, the assertion follows from Theorem \ref{thm:main} 

Fix $t \ge 2$ and consider an arbitrary subset $Z \subseteq K_t(G)$. Let $Z' = Z \setminus N(Z)$. We show that $|N(Z')|\ge \frac1t |Z'|$. Consider the bipartite graph $G'=(Z'\cup N(Z'), E')$. We use a charging argument. If there is $y \in N(Z')$ that has at most $t$ neighbors in $Z'$, we assign the neighbors to  $y$. We remove $y$ and its neighbors from $G'$ and continue in this fashion until  no vertex is left in $N(Z')$  with at most $t$ neighbors in $Z'$. Let $G''=(Z'' \cup N(Z''), E'')$ be the remaining graph. First, we note that all the vertices in $N(Z'')$ have degree at least $t+ 1$. This implies that $Z''\subseteq H_t(G)$, otherwise if there is $x \in Z''\cap L_i(G)$ for some $i \le t$, it must have a neighbor in $N(Z'')$ with degree at most $i \le t$ which is a contradiction. Since $t \ge 2$ and $G''$ is a subgraph of a planar graph or a $t$-bounded arborocity graph, by Fact~\ref{fact}, we have $|E''| \le t (|N(Z'')| + |Z''|)$. Thus, we have $(t+1)|Z''| \le |E''| \le t (|N(Z'')| + |Z''|)$, which implies that $|N(Z'')| \ge \frac{1}{t} |Z''|$. Wrapping up we obtain that
\begin{eqnarray} 
|N(Z)| & = & |Z\setminus Z'| + |N(Z')| \hspace{3.3cm} \text{ (by the definition of $Z'$)} \\
& \ge & |Z\setminus Z'| +\frac1t |Z' \setminus Z''| + |N(Z'')| \hspace{1cm} \text{ (by the assignment procedure)}\\
& \ge & |Z\setminus Z'| +\frac1t|Z' \setminus Z''| + \frac1t |Z''| \\
& \ge & \frac1t |Z\setminus Z'| + \frac1t |Z' \setminus Z''| + \frac1t |Z''| \\
& \ge & \frac1t |(Z\setminus Z') \: \cup \: (Z' \setminus Z'') \: \cup \: Z''| = \frac1t |Z|.
\end{eqnarray}
This finishes the proof. 
\end{proof}

\vspace{0.5cm}

\begin{proof}[\bf Proof of Theorem~\ref{thm:ls}] The left-hand side of the inequalities follows from the fact that in every edge, at least one of the endpoints is a locally superior vertex. For the right-hand side, we note that $L(G) \subseteq K_t(G)$ holds true for any positive integer $t$. Therefore, $|L(G)| \le |K_t(G)|$, and by applying Lemma~\ref{lem:KtG}, we obtain the assertion.
\end{proof}

\vspace{0.5cm}

A randomized streaming algorithm
for approximating $\nu(G)$ in the vertex-arrival model using $O(\frac{\sqrt{n}}{\eps^2})$ space is  presented \cite{Jowhari23}. Relying on the facts that 
$|L(G)|\le 3.5\nu(G)$ for planar graphs and $|L(G)| \le (\alpha+2)\nu(G)$ for $\alpha$-bounded arboricity graphs (using a different analysis), the author argues that the approximation factor of the algorithm is respectively $3.5+\eps$ and $\alpha+2+\eps$ for planar graphs and $\alpha$-bounded arboricity graphs. As a consequence of  Theorem~\ref{thm:ls}, we can establish improved bounds for the approximation factor of the algorithm proposed in \cite{Jowhari23}. Therefore, we obtain the following algorithmic corollary.

\begin{corollary} Let $G$ be a planar graph of order $n$.
There is a randomized data stream algorithm in the vertex-arrival model that approximates
$\nu(G)$ within $3\pm \eps$ factor and uses $\tilde{O}(\frac{\sqrt{n}}{\eps^2})$ space. 
For a graph $G$ with arboricity bounded by $\alpha$, there is an algorithm for approximating $\nu(G)$ within $1+\alpha \pm \eps$ factor using the same space bound. 
\end{corollary}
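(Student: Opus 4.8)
The plan is to treat the sublinear-space estimation of $|L(G)|$ as a black box and feed it the sharpened combinatorial bound of Theorem~\ref{thm:ls}, since the algorithmic work has already been done in \cite{Jowhari23}. Concretely, I would first recall from \cite{Jowhari23} the randomized streaming algorithm that operates in the vertex-arrival model and, with high probability, outputs a value $\tilde{L}$ satisfying $(1-\eps)|L(G)| \le \tilde{L} \le (1+\eps)|L(G)|$ using $\tilde{O}(\sqrt{n}/\eps^2)$ space. This routine is oblivious to the class of the input graph: it estimates the cardinality of the locally superior set regardless of planarity or arboricity. The only thing that changes in our setting, and the sole source of the improvement, is the quality guarantee linking $|L(G)|$ to $\nu(G)$.

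Next I would invoke Theorem~\ref{thm:ls}. For a planar graph it gives $\nu(G) \le |L(G)| \le 3\nu(G)$, and for a graph of arboricity at most $\alpha$ it gives $\nu(G) \le |L(G)| \le (\alpha+1)\nu(G)$. Substituting these sandwich bounds into the two-sided estimate for $\tilde{L}$, in the planar case I obtain
\[
(1-\eps)\,\nu(G) \;\le\; \tilde{L} \;\le\; 3(1+\eps)\,\nu(G),
\]
so $\tilde{L}$ both under- and over-estimates $\nu(G)$ by multiplicative factors controlled by $3$ up to the $(1\pm\eps)$ corrections. Running the base algorithm with error parameter $\eps' = \Theta(\eps)$ absorbs these corrections and yields exactly the claimed $3\pm\eps$ factor, while the space bound $\tilde{O}(\sqrt{n}/\eps^2)$ is unaffected up to the constants hidden in the $\tilde{O}$ notation. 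The bounded-arboricity case is word-for-word identical with the constant $3$ replaced by $\alpha+1 = 1+\alpha$.

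The only point worth verifying carefully is the meaning of ``within a $c\pm\eps$ factor'' and the direction of the guarantee: because $|L(G)|$ is an over-estimator of $\nu(G)$, the relevant statement is that $\tilde{L}$ never falls below $(1-\eps)\nu(G)$ and never exceeds $c(1+\eps)\nu(G)$, which is precisely a $c\pm\eps$ approximation after the rescaling of $\eps$. I do not expect a genuine obstacle here. The entire apparatus, including the high-probability success amplification by the standard median trick and the $\tilde{O}(\sqrt{n}/\eps^2)$ space analysis, is inherited verbatim from \cite{Jowhari23}; the corollary is simply the composition of that algorithm with the improved bound of Theorem~\ref{thm:ls}, and the proof amounts to recording this composition together with the $\eps$-bookkeeping above.
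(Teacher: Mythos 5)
Your proposal matches the paper's argument exactly: the paper likewise treats the $\tilde{O}(\sqrt{n}/\eps^2)$-space vertex-arrival algorithm of \cite{Jowhari23} as a black box for estimating $|L(G)|$ and simply replaces the old bounds $|L(G)|\le 3.5\nu(G)$ and $|L(G)|\le(\alpha+2)\nu(G)$ with the sharpened sandwich bounds of Theorem~\ref{thm:ls}, obtaining the $3\pm\eps$ and $(\alpha+1)\pm\eps$ factors after the same $\eps$-rescaling. Your version is correct and, if anything, spells out the $(1\pm\eps)$-bookkeeping more explicitly than the paper does.
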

%\subsection{Another 3 factor approximation}

We conclude this section by presenting the proof of our 3 factor approximation algorithm for finding the matching number of planar graphs.
\begin{proof}[\bf Proof of Theorem~\ref{thm:shallowhigh}]
Recall that $S_t(G) = \{ (u,v) \in E :\deg(u) \le t \text{ and } \deg(v) \le t\}.$
First we observe that $\nu(G) \le |S_t(G)|+|H_t(G)|$ is true. To see this, let 
$M$ be a maximum matching in $G$. Every edge $e \in M$ is either 
in $S_t(G)$ or one of its endpoints is in $H_t(G)$. This implies the left-hand side of the inequality. 
To prove the right-hand side, let $G$ be a planar graph. We need to show that 
$|S_2(G)|+|H_2(G)| \le 3 \nu(G)$. We proceed by  induction on the number of edges. Clearly, the assertion is true for an empty graph. Suppose $G$ has a shallow edge, i.e. there is an edge $uv \in S_2(G)$. In this case, we remove all the edges on $u$ and $v$ and obtain 
another graph $G'$. Since $\deg(u) \le 2$ and $\deg(v) \le 2$, by removing the edges on $u$ and $v$, the set $S_2(G) \cup H_2(G)$ loses at most $3$ elements. Note that removing edges may introduce some new shallow edges. Nonetheless, we can say
$|S_2(G)|+|H_2(G)| \le 3 + |S_2(G')|+|H_2(G')|$. By the induction hypothesis, we have $ |S_2(G')|+|H_2(G')| \le 3\nu(G') $. Therefore, we have $|S_2(G)|+|H_2(G)| \le 3+ 3\nu(G')$. On the other hand, $\nu(G) \ge \nu(G') + 1$. Hence, we obtain  $|S_2(G)|+|H_2(G)| \le 3 + 3(\nu(G)-1) \le 3\nu(G)$. 
Now, suppose that $G$ has no shallow edges, i.e. $S_2(G) = \emptyset$. By Lemma \ref{lem:KtG}, we have $|L_2(G) \cup H_2(G)| \le 3\nu(G)$. Clearly $|H_2(G)|\le 3\nu(G)$. Consequently, in this case, we get $|S_2(G)|+|H_2(G)|\le 3\nu(G)$. This completes the proof. \end{proof}

\section{Concluding Remarks}

The reader my wonder what is the rationale for excluding subsets $S$ where $N(S)=V(S)$ in the definition of the binding number. The reason for this exclusion is that it allows for a wider range of values for the binding number. Without this exclusion, the binding number would be limited to the range of $[0,1]$, whereas with the current definition, the binding number can be as high as $n-1$, which occurs in the complete graph $K_n$. Woodall \cite{Woodall73}, in particular, presented interesting results for graphs with binding number above $1$. For instance, he proved that every graph with binding number at least $\frac32$ has a Hamiltonian circuit. Additionally, note that for $X \subseteq V(G)$ with binding number in the range $[0,\frac12]$, we could drop this exclustion from the definition and without any harm. However in order to avoid making new definitions and for the sake of consistency, we are sticking with binding number as defined by Woodall. 

Theorem \ref{thm:main} is tight. As a simple witness, consider the star graph $S_k$ on $k+1$ vertices. The size of its maximum matching is $\frac{k+1}{k+1}=1$ while the binding number of $S_k$ is $\frac1k$. Also one can see that the statement does not follow in the case of $k=1$. As a counter-example consider the odd cylce $C_{n}$. The binding number of $C_n$ is $1$ while the size of matching in $C_n$ is less than $\frac{n}{2}$. 

The same proof provided for Theorem~\ref{thm:shallowhigh} also works for graphs with arboricity $2$. It is tempting to conjecture that $|S_t(G)|+|H_t(G)|\le (t+1)\nu(G)$ when $G$ is a graph has arboricity bounded by $t$. Unfortunately the inductive proof of Theorem~\ref{thm:shallowhigh} does not carry to the cases where $t \ge 3$. However, we know that when $t=1$ (i.e., $G$ is a forest), the statement $|S_1(G)|+|H_1(G)| \le 2\nu(G)$ is true. 
To see this, note that the isolated edges are part of any maximum matching. Also it is known \cite{BuryGMMSVZ19} that in a tree, the number of non-leaves is at most twice the size of the maximum matching. This proves the statement for the forests. 

%Finally it would be interesting to show a sublinear space algorithm for approximating 
%$|L(G)|$ or $|S_2(G)|+|H_2(G)|$ within $1\pm\eps$ factor in edge-arrival streams. We suspect that 

%===========================================================================
%= REFERENCES
\bibliographystyle{plainurl}% the recommended bibstyle
\bibliography{mybibfile}{}

\end{document}